\documentclass[
10pt,
pra,
twocolumn]
{revtex4-1}

\usepackage{graphicx}
\usepackage{amsmath, amsthm, amssymb}
\usepackage{subfigure}
\usepackage{comment}
\usepackage{bm}
\usepackage{epsfig,color}
\usepackage[sans]{dsfont}


\newcommand{\tr}{\text{tr}}
\newcommand{\ket}[1]{| #1 \rangle}
\newcommand{\bra}[1]{\langle #1|}

\newcommand{\be}{\begin{equation}}
\newcommand{\ee}{\end{equation}}
\newcommand{\bea}{\begin{eqnarray}}
\newcommand{\eea}{\end{eqnarray}}
\newcommand{\bes}{\begin{equation*}}
\newcommand{\ees}{\end{equation*}}
\newcommand{\beas}{\begin{eqnarray*}}
\newcommand{\eeas}{\end{eqnarray*}}


\newcommand{\rank}{\operatorname{rank}}
\newcommand{\range}{\operatorname{range}}

\newcommand{\proj}[1]{\ket{#1}\!\bra{#1}}

\renewcommand{\H}{\mathcal{H}}

\def\B{\mathcal{B}}

\def\tr{\mathrm{tr}}
\def\T{\mathcal{T}}

\def\i{\mathrm{id}}
\def\sn{\mathrm{SN}}

\def\m{\Lambda_{\mathrm{TP}}^{(k)}}
\def\ma{\Lambda_{\mathrm{TA}}^{}}

\newtheorem{thm}{Theorem}[section]
\newtheorem*{thm*}{Theorem}

\newtheorem{lem}[thm]{Lemma}
\newtheorem*{lem*}{Lemma}
\newtheorem{prop}[thm]{Proposition}

\newtheorem*{lipschitzLem*}{Lemma \ref{lipschitz}}
\newtheorem*{lipschitzCubeLem*}{Lemma \ref{lipschitzCube}}
\newtheorem*{pgmNearlyOptimalThm*}{Theorem \ref{pgmNearlyOptimal}}

\begin{document}

\title{ More entanglement implies higher performance in tailored channel discrimination tasks  }

\date{\today}

\author{Joonwoo Bae}
\affiliation{ School of Electrical Engineering, Korea Advanced Institute of Science and Technology (KAIST), 291
	Daehak-ro, Yuseong-gu, Daejeon 34141, Republic of Korea}

\author{ Dariusz Chru\'sci\'nski}
\affiliation{Institute of Physics, Faculty of Physics, Astronomy, and Informatics, Nicolaus Copernicus University, Grudziadzka 5, 87-100 Torun, Poland  }

\author{Marco Piani}
\affiliation{SUPA and Department of Physics, University of Strathclyde, Glasgow G4 0NG, UK}

\begin{abstract}
We show that every entangled state provides an advantage in bi- and multi-channel discrimination that singles out its degree of entanglement, quantified in terms of its Schmidt number and of the corresponding robustness measures.
\end{abstract}

\maketitle

Entanglement~\cite{horodecki2009quantum} is a resource in quantum information processing \cite{nielsen2010quantum}. Entangled states cannot be shared by arbitrarily many parties~\cite{coffman2000} and thus are naturally a resource for secure communication \cite{ref:crypto}. The preparation of large-size entangled states allows one to run quantum algorithms by means of local measurements \cite{ref:mbqc}. Entanglement also allows one to outperform classical counterparts in various information applications. For example, it is useful in improving measurement precision, i.e. in quantum metrology \cite{ref:qmetro}. In fact, all entangled states, even when only slightly entangled, are useful for some task, such as in teleportation \cite{ref:masanes}. Moreover, there are entangled states which are not distillable yet can still lead to an increase in channel capacity \cite{ref:smith}, can be used in quantum crytography \cite{boundentkey}, and also lead to producing probabilities that cannot be prepared by shared randomness and local operations \cite{ref:brunner}. 

A process in quantum information theory, that is, some dynamics of quantum states, corresponds to a quantum channel described in mathematical terms by a so-called trace-preserving and completely positive map over quantum states. A linear map on operators is said to be positive if it preserves the positive semidefiniteness of operators, while it is called completely positive if it remains positive under an arbitrary extension to a an environment or reference system. This distinction is closely related to the existence of entangled states, and not needed in the classical case: the classical analogue of a channel, that is, the most general stochastic process, is simply identified by a positive transformation on probability distributions, since positivity itself already implies the condition of being completely positive. For any positive but not completely map on quantum states, however, there exist entangled states of system and environment such that under the partial action of the map, the entangled states are transformed into non-positive operators, that cannot be interpreted as quantum states \cite{ref:horodeckif}. 

A related property is that, in contrast to the classical analogy of stochastic processes, correlations between a probe and an ancilla can be useful in improving the distinguishability of quantum channels \cite{ref:dariano}. Recently, a tight connection between entanglement and quantum channel discrimination has been established: for any entangled state, independently of how weakly entangled, there exist a pair of quantum channels for which the state is useful to improve channel distinguishability with respect to the best discrimination strategy that does not make use of correlations \cite{ref:pw}. In other words, all entangled states are useful in quantum channel discrimination, or more precisely, in some tailored (to the state under scrutiny) quantum channel discrimination task. The result is remarkable since distinguishing quantum evolutions is a fundamental and operational task in information theory (see e.g.\cite{ref:renner} for applications), and shows the equivalence between the conditions of entanglement and some improvement in channel distinguishability.

The advantage that entanglement can provide in the discrimination of quantum evolutions has been further scrutinized, e.g. with respect to the issue of whether the advantage is preserved if constraints are imposed on the measurements that can be performed after the evolution has taken place~\cite{matthews2010entanglement,piani2015necessary}. In this work, we prove several refinements of the results of Ref.~\cite{ref:pw}, not with respect to the issue of quality of the measurement, but of the quality of the entangled input. In particular, we show that every entangled state provides an advantage in bi- and multi-channel discrimination that singles out its degree of entanglement, quantified in terms of its Schmidt number~\cite{terhal2000schmidt} and of the related robustness measures~\cite{vidal1999robustness}. 

\section{Preliminaries}

Throughout, we denote by  $\H_d$ a Hilbert space of finite dimension $d$. The set of bound linear operators in $\H_d$ is denoted by $\B(\H_d)$, and the subset of quantum states, i.e. the set of positive semidefinite operators having unit trace, by $S(\H_d)$. 

\subsection{Entanglement}
When considering tensor-product systems, e.g. with Hilbert space $\H_d\otimes \H_{d'}$, one typically distinguishes between product states $\ket{\alpha}_{\H_d}\otimes\ket{\beta}_{\H_{d'}}$ and entangled states that are not product. One can have a finer classification based on the Schmidt decomposition~\cite{nielsen2010quantum}; for an arbitrary state $\ket{\psi}_{\H_d \H_{d'}}$ it is possible to identify local orthonormal bases $\{\ket{a_i}_{\H_d}\}$ and $\{\ket{b_i}_{\H_{d'}}\}$ such that
\begin{equation}
\label{eq:SD}
\ket{\psi}_{\H_d \H_{d'}} = \sum_{i=1}^{\mathrm{SR}(\ket{\psi})} \sqrt{q_i}\ket{a_i}_{\H_d}\otimes \ket{b_i}_{\H_{d'}},
\end{equation}
where the Schmidt rank $\mathrm{SR}(\ket{\psi})$ indicates the number of non-vanishing tensor-product terms. For a mixed state $\rho$ in $S(\H_d\otimes \H_{d'})$, the Schmidt number is then defined as~\cite{terhal2000schmidt}
\bea
\mathrm{SN}(\rho) = \min_{\{ p_i , | \psi_i \rangle \}} \max_{i } ~\mathrm{SR}(|\psi_i \rangle), \label{eq:sn}
\eea
with the minimum taken with respect to arbitrary pure-state decompositions satisfying $\rho = \sum_i p_i \ket{\psi_i}\bra{\psi_i}$. Let $S_k$ denote the set of states having Schmidt number not greater than $k$. Such a set is clearly convex and compact; more explicitly, it is clear that such a set is given by the convex hull of all pure states that have Schmidt number not greater than $k$. 
We have a strict order among the sets, such that
\bea
S_{1} \varsubsetneq S_2 \varsubsetneq \cdots \varsubsetneq S_{d_\textrm{min}} = S(\H_A \otimes \H_S), \label{eq:sets}
\eea
where $d_{\textrm{min}}$ denotes the dimension of the Hilbert space of the smaller of the two entangled systems. Note that $S_1$ corresponds to the set of separable states of the form $\sum_i p_i \ket{\alpha_i}\bra{\alpha_i}\otimes\ket{\beta_i}\bra{\beta_i}$, for which correlations can be explained in classical terms.
 
\subsection{Witnesses and robustness}

Whenever a bipartite state $\rho\in S(\H_d\otimes \H_{d'})$ has Schmidt number strictly larger than $k$ (with $k<\min\{d,d'\}$), there is a witness $W^{(k)}\in \B(\H_d\otimes \H_{d'})$ such that~\cite{sanpera2001schmidt,skowronek2009cones,horodecki2009quantum,bae2016operational}
\[
\tr(\rho W^{(k)}) < 0
\]
while
\[
\tr(\sigma W^{(k)}) \geq 0\quad \forall \sigma\in S_k.
\]
In order to quantify to what extent a bipartite  state $\rho$ fails to be part of the subset $S_k$, one can define a generalized robustness measure $R_{S_k}$, originally introduced just for entanglement per se~\cite{vidal1999robustness}:
\begin{equation}
\begin{split}
&R_{S_k}(\rho)\\
&:= \min\{R\geq 0 \, |\\
&\qquad\rho = (1+R)\sigma^{(k)}-R\tau, \sigma\in S_k, \tau\in S(\H_d\otimes \H_{d'})\}.
\end{split}
\end{equation}
This quantifier can be given the interpretation of minimal noise needed to destroy the ``resource'' constituted by $\rho$ for not being in $S_k$, since one is asking for the minimal noise rate $R\geq 0 $ such that
\[
\frac{\rho + R \tau}{1+R} \in S_k
\]
for some general bipartite state $\tau$. It was soon recognized that the robustness admits an expression in terms of ``quantitative witnesses''~\cite{brandao2005quantifying}. In general, for the Schmidt-number robustness $R_{S_k}$, one has the following semidefinite programming (SDP)~\cite{boyd2004convex,watrous2018theory} expression 
\begin{equation}
\begin{aligned}
R_{S_k}(\rho) =	{\text{maximize}}\quad & - \tr(W^{(k)}\rho) \\
	\text{subject to}\quad \label{eq:SDPcondition1} &   \tr(W^{(k)} \sigma^{(k)})\geq 0\quad \forall \sigma^{(k)}\in S_k\\
	& W^{(k)}\leq \openone.
\end{aligned}
\end{equation} 
Obviously, there is a hierarchy
\[
R_{S_k}(\rho)\leq R_{S_{k'}}(\rho)\textrm{ for } k\geq k',
\]
but notice that, given two bipartite states $\rho$ and $\sigma$, it might happen that $R_{S_k}(\rho) < R_{S_k}(\sigma)$ while $R_{S_{k'}}(\rho) > R_{S_{k'}}(\sigma)$, for $k\neq k'$. It is easy to construct such examples even just considering pure states.

\subsection{Quantum operations}

We denote by $\mathcal{T}(\H_d , \H_{d'})$ the set of linear mappings from $\B(\H_d)$ to $\B(\H_{d'})$. A mapping $\Lambda\in \T(\H_{d} , \H_{d'})$ is called Hermitian (also known as Hermiticity-preserving) if for all $X\in\B(\H_d)$ it holds that $\Lambda^{\dagger}(X) = \Lambda(X^{\dagger})$. A linear mapping $\Lambda$ is trace-nonincreasing if $\tr \Lambda (X) \leq \tr X$ for all positive semidefinite $X\geq0$, trace-preserving if $\tr \Lambda (X) = \tr X$ for all $X\in\B(\H_d)$, and trace annihilating if instead $\tr \Lambda (X) = 0$ for all $X$. A mapping is called positive (also known as positivity-preserving) if $\Lambda (X) \geq 0$ for all positive semidefinite operators $X\geq 0$. Then, a map $\Lambda$ is called $k$-positive if it remains positive under extension to $k$-dimensional environment, i.e. $\i_k \otimes \Lambda$ is positive, with the identity map $\i_k \in \T (\H_{k} , \H_{k})$. If a map is $k$-positive for all $k\geq 1$, then it is called completely positive. A quantum channel is described by a trace-preserving and completely positive linear map on quantum states. Note that a map in $\T(\H_d , \H_{d'})$  is completely positive if and only if it is $d$-positive.

An instrument is a collection of completely positive trace-nonincreasing linear maps $\{\tilde{\Lambda}_i\}$, so-called subchannels, such that $\Lambda = \sum_i \tilde{\Lambda}_i$ is a channel. The notion of instrument captures mathematically the concept of branching of a linear evolution of a system, e.g., it may describe how an atom undergoes a de-excitation process or does not, but it may also describe some stochastic controlled evolution. 
An instrument allows one to calculate both the (state-dependent) probability of the different branches and the corresponding final state of the system: in general, subject to the action of an instrument $\{\tilde{\Lambda}_i\}$, a quantum system initially in a state $\rho$ evolves into a (renormalized) state $\rho' = \tilde{\Lambda}_i (\rho)/\tr \tilde{\Lambda}_i(\rho)$ with probability $\tr \tilde{\Lambda}_i(\rho)$. A special case of instrument is that of multiple channels $\{\Lambda_i\}$, applied according to a priori probability distribution $\{p_i\}$, in which case one can think of the instrument $\{\tilde{\Lambda}_i = p_i \Lambda_i\}\}$. Notice that in the latter case, each subchannel is not only trace-nonincreasing, but trace-scaling.

\subsection{Choi-Jamiolkowski isomorphism}

There exists a natural mapping, actually an isomorphism, between linear maps and quantum operators, that is, between $\T(\H,\H')$ and $\B(\H\otimes \H')$~\cite{ref:jiso,ref:choi}. For a given map $\Lambda \in \T (\H, \H')$, for $\H=\H_d$, the corresponding unique element in $\B(\H\otimes \H')$ is given by
\[
\chi_{\Lambda} = (\i \otimes \Lambda) (|\phi^{+}\rangle \langle \phi^{+} |),
\]
with $|\phi^{+}\rangle = (\ket{1}\ket{1} + \cdots + \ket{d}\ket{d}) / \sqrt{d}$ a fixed standard maximally entangled state, where $\{|i\rangle\}_{i=1}^d$ denotes some fixed orthonormal basis of $\H$. The operator $\chi_{\Lambda}$ is called the Choi matrix of the map $\Lambda$; for a quantum channel $\Lambda$, its Choi matrix is a state, called the Choi-Jamiolkowski (CJ) state. A map $\Lambda \in \T (\H_k,\H_{k'})$ is trace-preserving if and only if $\tr_{\H_{k'}} ( \chi_{\Lambda} ) = \mathrm{I}_{\H_{k}}/k$ where $\mathrm{I}_{\H}$ denotes the identity operator on $\H$, and trace-nonincreasing if and only if $\tr_{\H_{k'}} ( \chi_{\Lambda} ) \leq \mathrm{I}_{\H_{k}}/k$. On the other hand, a map $\Lambda \in \T (\H_k,\H_{k'})$ is unital if and only if $\tr_{\H_{k}} ( \chi_{\Lambda} ) = \mathrm{I}_{\H_{k'}}/k$.

\subsection{Schmidt number and $k$-positive maps}

It holds that, for a state $\rho_{AS}$, its Schmidt number is not larger than $k$ if and only if it remains positive under the partial action of all $k$-positive maps~\cite{terhal2000schmidt}, i.e.
\begin{multline}
\label{eq:iffSk}
\rho_{AS}\in S_k
\\
\iff \i \otimes \Lambda (\rho_{AS}) \geq 0 \textrm{ for all }k\textrm{-positive maps } \Lambda^{(k)}. 
\end{multline}
Clearly, the set of separable states, for which the Schmidt number is $k=1$, corresponds to the states which remain positive for all positive maps. On the other hand, by definition a map is $k$-positive if and only if, through its partial action, it preserves the positivity of all states in $S_k$.

\subsection{Minimum-error quantum state discrimination}

Suppose there is a box which prepares a quantum system in one of many possible quantum states $\rho_i$, with each preparation happening with a priori probability $p_i$. Then, minimum-error quantum state discrimination corresponds to the measurement procedure through which we obtain the highest probability of success (equivalently, the minimum probability of error) in correctly identifying which state was actually prepared. The figure of merit can be taken to be the optimal probability of guessing correctly,
\[
p_{\mathrm{guess}} = \max_{M_i} \sum_i p_i \tr(M_i\rho_i),
\]
where the maximization is over all generalized measurements (also known as Positive-Operator-Valued Measures (POVMs)) $\{M_i\}$, satisfying $M_i\geq 0$, $\sum_i M_i = \openone$. Notice that, by embedding the probabilities in the definition of unnormalized states $\tilde{\rho_i} = p_i \rho_i$, we can rewrite this as
\[
p_{\mathrm{guess}} = \max_{M_i} \sum_i \tr(M_i\tilde{\rho_i}),
\]

The optimal discrimination strategy, with corresponding optimal probability of success, for two quantum states $\rho_1$ and $\rho_2$, given with probability $p$ and $1-p$, respectively, has been obtained in Ref. \cite{ref:hels}. One has
\bea
\begin{split}
p_{\mathrm{guess}}^{}
&= \frac{1}{2} (1 + \| p \rho_1 - (1-p) \rho_2 \|_1 ) \\
&= \frac{1}{2} (1 + \| \tilde{\rho}_1 - \tilde{\rho}_2 \|_1 ),
\end{split}
\label{eq:hels}
\eea
where $\| \cdot \|_1$ denotes the trace-norm, $\| A \|_1 = \tr \sqrt{A^{\dagger}A }$ for $A\in \B(\H)$. 

\subsection{Minimum-error quantum (sub)channel discrimination }

Suppose that there is a box which applies a quantum operation to a quantum system, and then the resulting system is returned. Let us assume that the box can apply either of two known quantum channels $\Phi_1$ or $\Phi_2$, the first with probability $p$ and the second with probability $1-p$, respectively. This is the scenario of binary channel discrimination, which is particularly relevant when we want to distinguish between an ideal physical evolution and the actual real evolution. For an input state $\rho$, the resulting state is either $\Phi_1 (\rho)$ or $\Phi_{2} (\rho)$; we can then proceed to discriminate the two channels by discriminating such two states. A remarkable property in quantum channel discrimination is that correlations of the input probe with some ancillary system may actually substantially increase the guessing probability~\cite{ref:kitaev}. To be useful, such correlations need to come from probe-ancilla entanglement, rather than just classical correlations. Preparing a state $\rho_{AS} \in S(\H_k\otimes \H_i)$ of a $k$-dimensional ancilla $\H_k$ and of the input system $\H_i$, one has resulting states $\rho_1 = (\i_{k} \otimes \Phi_1) (\rho_{AS})$ and $ \rho _2 = (\i_{k} \otimes \Phi_2)(\rho_{AS})$. Then, the channel applied in the box is found by discriminating between two resulting state $\rho_1$ and $\rho_2$.

While the formula \eqref{eq:hels} already accounts for the optimal final measurement, we still have to optimize over the input state preparation $\rho$, so that the resulting states $\rho_1 = (\i_{k} \otimes \Phi_1) (\rho_{AB})$ and $ \rho _2 = (\i_{k} \otimes \Phi_2)(\rho_{AB})$ are the most distinguishable for the given channels. This is mathematically captured by introducing a norm on Hermitian maps $\Phi \in \T (\H ,\H')$,
\bea
\| \Phi \|^{(k)} = \max_{\rho_{AS} \in S(\H_k \otimes \H)} \|  \i_k \otimes \Phi (\rho_{AS}) \|_1. \label{eq:knorm} 
\eea 
With such a notion for the $k$-norm of linear maps, we can now write the channel distance between $\Phi_1$ and $\Phi_2$ applied with probabilities $p$ and $1-p$, respectively, together with the optimisation over input quantum states that involve a $k$-dimensional ancilla,
\bea
D_{k}^{p}  [ \Phi_1,\Phi_2 ]  & = &  \|  p\Phi_1 - ( 1-p ) \Phi_2  \|^{(k)} \nonumber \\
& = & \max_{\rho_{AS}  } \| \i_k \otimes (p \Phi_1  - (1-p)   \Phi_2 )(\rho_{AS})    \|_1.  ~~ \label{eq:dist}
\eea 
Then, the optimal (both with respect to the choice of input and of final measurement) guessing probability for the two quantum channels, when exploiting a $k$-dimensional ancillary system, is given by 
\bea
p_{\mathrm{guess}}^{\textrm{(k)}}  = \frac{1}{2}(1 + D_{k}^p [\Phi_1 , \Phi_2] ). \label{eq:guessch}
\eea

We remark that the case where no ancilla is used, or where only separable probe-ancilla correlations are present,  is captured by the distance $D_{1}^p = \max_{\sigma\in \mathrm{SEP}} \|  p\i \otimes \Phi_1 (\sigma )  -(1-p)\i \otimes \Phi_2 (\sigma)  \|_1$, where $\mathrm{SEP}$ denotes the set of all separable states.
The other extreme case  is $D_{d}^{p}$, which for $p=1/2$ corresponds to the so-called diamond norm of channels $\| \cdot \|_{\diamond}$, or the norm of complete boundness, $\| \cdot \|_{cb}$, or cb-norm for short~\cite{paulsen2002completely}.

In Ref. \cite{ref:pw}, the usefulness of all entangled states in channel discrimination is shown. Namely, a state $\rho_{AS} \in S(\H^{(A)}\otimes \H^{(S)})$ is entangled if and only if there exists a pair of quantum channels $\Phi_1,\Phi_2 \in \T (\H^{(S)} , \H^{(S')})$ such that 
\bea
&&  \|  \i \otimes \Phi_1 (\rho_{AS})  -\i \otimes \Phi_2 (\rho_{AS})   \|_{1}  \nonumber \\
 && ~~~~~~~ >   \max_{\sigma\in \mathrm{SEP}} \|  \i \otimes \Phi_1 (\sigma )  -\i \otimes \Phi_2 (\sigma)  \|_1.
 \label{eq:piani}
\eea
In terms of the distance measure in Eq. (\ref{eq:dist}), the result can be restated as follows: a state $\rho_{AS} \in S(\H^{(A)}\otimes \H^{(S)})$ is entangled if and only if there exists a pair of channels $\Phi_1$ and $\Phi_2$ such that
\bea
\|\Phi_1 - \Phi_2 \|_{\diamond} & = & D_{d}^{1/2} (\Phi_1 , \Phi_2)  \nonumber \\
& \geq & \|  \i \otimes \Phi_1 (\rho_{AS})  -\i \otimes \Phi_2 (\rho_{AS})   \|_{1} \nonumber \\
&  > &  D_{1}^{1/2} (\Phi_1 , \Phi_2). \label{eq:pw}
\eea
That is, entangled states necessarily improve the discrimination of some pair of channels.

Notice that while Ref. \cite{ref:pw} considered only the case of equal a priory probability $p=1-p=1/2$, it is easy to generalize the result to the case of an arbitrary given $p$, meaning the following: for a fixed an entangled $\rho_{AS}$, for any choice of $0<p<1$, it is possible to find two channels $\Phi_1$ and $\Phi_2$ given with a priori probability $p$ and $1-p$, respectively, such that the probability of correctly guessing the which channel information is increased by the use of $\rho_{AS}$ compared to the case where no-correlations or simply classical correlations are used. In the following, when generalizing the result of Ref. \cite{ref:pw}, we will stick to $p=1/2$, but again an arbitrary probability could be considered.

The task of channel discrimination can be generalized to the task of subchannel discrimination -- and/or its subclass of multichannel discrimination -- where one is tasked with guessing correctly which  branch of the evolution took place among the many possible described by an instrument $\{\tilde{\Lambda}_i\}$. It is important to emphasize that in general subchannel discrimination we cannot imagine that the which-subchannel information is available before the evolution is applied to the probe, but we can in multichannel discrimination.
With the use of a fixed probe-ancilla state $\rho$, the optimal guessing probability is
\[
p_{\mathrm{guess}}(\tilde{\Lambda}_i,\rho) = \max_{M_i} \sum_i p_i \tr(M_i\i\otimes\tilde{\Lambda}_i(\rho)).
\]

\section{Results}

\subsection{Schmidt number and binary channel discrimination}

In this section, we generalize the main result of Ref.~\cite{ref:pw}, by proving that any quantum state $\rho_{AS}$ not in $S_{k}$, that is, with Schmidt number strictly higher than $k$, leads to a an improved channel discrimination probability for some pair of channels acting on the probe system $S$, with respect to what achievable with arbitrary input states in $S_k$.

Namely, we show that $\rho_{AS}$ satisfies $\sn(\rho_{AS})>k$ if and only if there exist a pair of quantum channels $\Phi_1$ and $\Phi_2$ such that
\bea
&&     \| \i \otimes \Phi_1 (\rho) -  \i\otimes \Phi_2 (\rho)  \|_1 \nonumber \\
&  & ~~~~~~~ > \max_{\sigma: \sn(\sigma) \leq k } \| \i \otimes \Phi_1 (\sigma) -  \i \otimes \Phi_2 (\sigma)  \|_1 \label{eq:result}
\eea

The proof of this follows closely the proof in Ref.~\cite{ref:pw}. In particular, we make use of the fact that it is possible to strengthen \eqref{eq:iffSk} to the consideration of trace-preserving maps only:
\begin{multline}
\label{eq:iffSkTP}
\rho_{AS}\in S_k
\iff \\ \i \otimes \Lambda^{(k)}_{\textrm{TP}} (\rho_{AS}) \geq 0 \textrm{ for all TP }k\textrm{-positive maps } \Lambda^{(k)}_{\textrm{TP}},
\end{multline}
or, equivalently,
\begin{multline}
\label{eq:iffSkTPnot}
\rho_{AS}\notin S_k
\iff \\ \i \otimes \Lambda^{(k)}_{\textrm{TP}} (\rho_{AS}) \ngeq 0 \textrm{ for some TP }k\textrm{-positive map } \Lambda^{(k)}_{\textrm{TP}}.
\end{multline}
Notice that, given a normalized state $\rho$ and a trace-preserving map $\Gamma$, $\|\Gamma(\rho)\|_1=1$ if and only if $\Gamma(\rho)\geq 0$, so, if $\Gamma(\rho)\ngeq 0$, then $\|\Gamma(\rho)\|_1>1$. In the following, we consider $\Gamma = \i \otimes \Lambda^{(k)}_{\textrm{TP}}$, that is, the partial action of a trace-preserving $k$-positive map, so that \eqref{eq:iffSkTPnot} can be equivalently stated as
\begin{multline}
\label{eq:iffSkTPnorm}
\rho_{AS}\notin S_k
\iff \\ \|\i \otimes \Lambda^{(k)}_{\textrm{TP}} (\rho_{AS})\| > 1 \textrm{ for some TP }k\textrm{-positive map } \Lambda^{(k)}_{\textrm{TP}}.
\end{multline}

Now, given a trace-preserving $k$-positive map $\m : \B(\H_{d_2}) \rightarrow \B(\H_{d_3})$, one can construct a trace-annihilating map $\ma$ by increasing the output dimension, as
\bea
\ma[A] = \m [A]- (\tr A )~| 0\rangle \langle 0|, \nonumber 
\eea
where $|0\rangle$ is orthogonal to all vectors in $\H_{d_3}$.
As shown in Ref.~\cite{ref:pw}, for a trace-annihilating map $\Lambda_{\mathrm{TA}}$,
 there exist a pair of quantum channels $\Phi_1$ and $\Phi_2$, and also a constant $c_{\Lambda_{\mathrm{TA}}} >0$, such that
\bea
c_{\Lambda_{\mathrm{TA}}} \Lambda_{\mathrm{TA}} = \Phi_1 - \Phi_2. \label{eq:1}
\eea
\\

Thus, we have identified two channels $\Phi_1,\Phi_2$ such that
\[
\Phi_1 - \Phi_2 = c_{\Lambda_{\mathrm{TA}}} ( \m [\cdot]- \tr(\cdot )~| 0\rangle \langle 0|).
\]
Therefore, we have
\begin{align}
&\quad\| \i \otimes (\Phi_1 -\Phi_2) [\rho] \|_1 \nonumber \\ 
& =  c_{\Lambda_{\mathrm{TA}}} \|     \i\otimes \Lambda_{\mathrm{TP}}^{(k)} [\rho] - ( \tr_{\H_{d_2}} \rho ) \otimes | 0\rangle \langle 0|  \|_1 \nonumber \\
& = c_{\Lambda_{\mathrm{TA}}} ( \|      \i\otimes \Lambda_{\mathrm{TP}}^{(k)} [\rho] \|_1 +1). \label{eq:6} 
\end{align}
For any state $\sigma$ having $\sn(\sigma)\leq k$, it holds that
\bea
\| \i \otimes (\Phi_1 -\Phi_2) [\sigma] \|_1 = c_{\Lambda_{\mathrm{TA}}} ( \| \i \otimes \Lambda_{\mathrm{TP}}^{(k)} [\sigma]  \| + 1) = 2 c_{\Lambda_{\mathrm{TA}}}. \nonumber
\eea
while, if $\rho$ is detected by the partial action of $\Lambda_{\mathrm{TP}}^{(k)}$ as having Schmidt number strictly larger than $k$,
\[
\| \i \otimes (\Phi_1 -\Phi_2) [\rho] \|_1 = c_{\Lambda_{\mathrm{TA}}} ( \| \i \otimes \Lambda_{\mathrm{TP}}^{(k)} [\sigma]  \| + 1) > 2 c_{\Lambda_{\mathrm{TA}}}.
\]

\subsection{Schmidt-number robustness and channel discrimination}

While in the preceding section we have considered the case of binary channel discrimination, in this section we focus on multichannel discrimination, finding that every state that has Schmidt number strictly larger than $k$ allows us to identify correctly the which-channel information for some tailored multichannel discrimination problem better than what allowed by any state with Schmidt number $k$ or less. To prove this, we will need some preliminary results.

\begin{prop}
The Schmidt number robustness $R_{S_k}$ is stable under embedding into larger local spaces.
\end{prop}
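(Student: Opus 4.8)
The plan is to fix what ``embedding'' means and then prove the claimed invariance as a pair of opposite inequalities. I would take isometric embeddings $V:\H_d\to\H_D$ and $W:\H_{d'}\to\H_{D'}$ with $D\geq d$ and $D'\geq d'$, set $U=V\otimes W$, and write $\tilde\rho=U\rho U^\dagger$ for the embedded state, which is supported on the block $\range(V)\otimes\range(W)$ with local projector $\P=UU^\dagger=VV^\dagger\otimes WW^\dagger$. The goal is then to show $R_{S_k}(\tilde\rho)=R_{S_k}(\rho)$, where the two robustnesses are computed in $S(\H_D\otimes\H_{D'})$ and in $S(\H_d\otimes\H_{d'})$ respectively.

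The inequality $R_{S_k}(\tilde\rho)\leq R_{S_k}(\rho)$ I expect to be immediate: starting from an optimal small-space decomposition $\rho=(1+R)\sigma-R\tau$ with $\sigma\in S_k$, $\tau$ a state and $R=R_{S_k}(\rho)$, conjugation by $U$ gives $\tilde\rho=(1+R)U\sigma U^\dagger-R\,U\tau U^\dagger$; since a local isometry sends each Schmidt vector to one of the same Schmidt rank, $U\sigma U^\dagger\in S_k$ and $U\tau U^\dagger$ is again a state, so the decomposition stays feasible in the larger space. The substantive direction is $R_{S_k}(\tilde\rho)\geq R_{S_k}(\rho)$, which I would obtain by projecting an optimal large-space decomposition back onto the block. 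Writing an optimal $\tilde\rho=(1+R')\tilde\sigma-R'\tilde\tau$ with $R'=R_{S_k}(\tilde\rho)$ and using $\P\tilde\rho\P=\tilde\rho$, sandwiching by $\P$ yields $\tilde\rho=(1+R')\P\tilde\sigma\P-R'\P\tilde\tau\P$. Setting $a=\tr(\P\tilde\sigma\P)$ and $b=\tr(\P\tilde\tau\P)$, renormalizing to states $\sigma'=\P\tilde\sigma\P/a$ and $\tau'=\P\tilde\tau\P/b$, and taking the trace of the projected identity to get $(1+R')a-R'b=1$, I would arrive at $\tilde\rho=(1+R'b)\sigma'-(R'b)\tau'$; pulling this back through $U^\dagger$ into the block then gives a feasible small-space decomposition with robustness $R'b\leq R'$, since $0\leq b\leq 1$.

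The one genuinely nontrivial ingredient, and the step I expect to be the main obstacle, is the claim that the local projection $\P$ does not increase Schmidt number, so that $\P\tilde\sigma\P$ stays in the cone of Schmidt-number-at-most-$k$ operators and $\sigma'\in S_k$; this I would justify by applying $\P$ termwise to a Schmidt decomposition of each pure state appearing in a decomposition of $\tilde\sigma$ and noting that killing or rescaling Schmidt terms can only lower the rank. The surrounding trace bookkeeping that turns sub-normalized projected operators into a bona fide $(1+R)\sigma-R\tau$ form is elementary but needs care in the edge cases $a=0$ (excluded because $\tilde\rho$ is normalized) and $b=0$ (which forces $\rho\in S_k$, hence $R_{S_k}(\rho)=0$). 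A cleaner and more symmetric alternative I would keep in reserve is to argue at the level of the SDP in the witness formulation: isometric compression $\tilde W\mapsto U^\dagger\tilde W U$ maps feasible large-space witnesses to feasible small-space ones and back (using $U^\dagger\openone U=\openone$ and that a witness is nonnegative on the entire Schmidt-number cone, including its sub-normalized elements), while leaving the objective $-\tr(W\rho)$ unchanged because $\tilde\rho$ lives on the block; this would deliver both inequalities simultaneously.
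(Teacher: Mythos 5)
Your proof is correct, but it runs through the primal (decomposition) formulation of the robustness, whereas the paper works in the dual (witness) formulation: it takes an optimal witness $W_{AB}$ in the large space and compresses it as $W'_{AB}=P_A\otimes P_B\, W_{AB}\, P_A\otimes P_B$ with $P_A,P_B$ the projectors onto $\range(\rho_A),\range(\rho_B)$, checking that $W'_{AB}$ stays feasible ($W'_{AB}\leq\openone$ and nonnegative on $S_k$) with unchanged objective since $\range(\rho_{AB})\subset\range(\rho_A)\otimes\range(\rho_B)$. That witness-compression argument is essentially your ``reserve'' alternative. Two points of comparison are worth noting. First, the inequality you flag as substantive ($R_{S_k}(\tilde\rho)\geq R_{S_k}(\rho)$, obtained by projecting the optimal large-space decomposition back onto the block) is the one the paper leaves implicit, while the direction the paper proves explicitly is the one you call immediate; your allocation of effort is arguably the more honest one, since it is precisely the projected direction that requires the key fact that local projections do not increase Schmidt rank and hence preserve the $S_k$ cone. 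Second, your trace bookkeeping is consistent: from $(1+R')a-R'b=1$ one gets $1+R'b=(1+R')a$, which is exactly what is needed for $\tilde\rho=(1+R'b)\sigma'-R'b\,\tau'$ to hold with the renormalized $\sigma',\tau'$, yielding the feasible coefficient $R'b\leq R'$, and your treatment of the edge cases $a=0$ and $b=0$ is right. What the primal route buys you is independence from the SDP/quantitative-witness characterization of $R_{S_k}$; what the paper's dual route buys is brevity and the fact that feasibility of the compressed witness ($PWP\leq P\leq\openone$) is essentially automatic. Either version establishes the proposition.
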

\begin{proof}
	Let $W_{AB}$ be an optimal witness for the sake of the Schmidt number robustness $R^{d_A,d_B}_{S_k}(\rho_{AB})$ in local dimensions $d_A\geq \rank(\rho_A)$ and $d_B\geq \rank(\rho_B)$, that is, $R^{d_A,d_B}_{S_k}(\rho_{AB})=-\tr(W_{AB}\rho_{AB})$, with $W_{AB}\leq \openone_{d_A}\otimes \openone_{d_B}$ and $\tr(W_{AB}\sigma_{AB}^{(k)})\geq 0$ for all $\sigma_{AB}^{(k)}\in S_k^{d_A,d_B}$, where $S_k^{d_A,d_B}$ is the set of states in $S(\H_{d_A}\otimes\H_{d_B})$ with Schmidt number at most $k$. Consider projector $P_A$ onto $\range(\rho_A)$ and $P_B$ onto $\range(\rho_B)$. It is immediate to check that $W'_{AB} = P_A\otimes P_B W_{AB} P_A \otimes P_B$ satisfies the conditions for a feasible Schmidt number witness in the optimization for  $R^{\rank(\rho_A),\rank(\rho_B)}_{S_k}(\rho_{AB})$; moreover, since $\range(\rho_{AB})\subset\range(\rho_A)\otimes\range(\rho_B)$, one has $\tr(W'_{AB}\rho_{AB})=\tr(W_{AB}\rho_{AB})$.
\end{proof}

\begin{lem}
Through a suitable embedding into a larger local dimension on $A$, an optimal Schmidt number witness $W_{AB}$ can be found for the Schmidt number robustness $R_{S_k}$ that satisfies $W_B=\tr_A W_{AB} \propto \openone_B$.
\end{lem}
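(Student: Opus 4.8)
The plan is to obtain the desired witness by \emph{twirling} an optimal witness over the $B$ system with a unitary $1$-design, while recording the design label in an enlarged register attached to $A$; the preceding Proposition then guarantees that this enlargement of $A$ leaves $R_{S_k}$ unchanged, so optimality is preserved. Concretely, let $W_{AB}$ be an optimal witness, so that $W_{AB}\leq\openone$, $\tr(W_{AB}\sigma)\geq 0$ for all $\sigma\in S_k$, and $-\tr(W_{AB}\rho_{AB})=R_{S_k}(\rho_{AB})$. I would fix a unitary $1$-design $\{V_j\}_{j=1}^{N}\subset\B(\H_{d_B})$ with uniform weights, chosen to contain the identity $V_{j_0}=\openone$ (e.g.\ the Heisenberg--Weyl operators), so that $\frac1N\sum_j V_j X V_j^\dagger=\frac{\tr X}{d_B}\openone_B$ for every $X$. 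Enlarging $A$ to $\tilde A=A\otimes A'$ with $A'$ of dimension $N$ and orthonormal basis $\{\ket{j}\}$, I would set
\[
\tilde W_{\tilde A B}=\sum_{j}\ketbra{j}_{A'}\otimes(\openone_A\otimes V_j)\,W_{AB}\,(\openone_A\otimes V_j^\dagger),
\]
which is block diagonal in $A'$.

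First I would check feasibility. Since each block is unitarily equivalent to $W_{AB}\leq\openone$, block diagonality gives $\tilde W\leq\openone$. For the witness inequality, take any $\tilde\sigma\in S_k$ on $\tilde A\otimes B$ and consider its block $\tilde\sigma_j:=\bra{j}_{A'}\tilde\sigma\ket{j}_{A'}$, which arises by applying the local (on the $\tilde A$ side) operation $\openone_A\otimes\bra{j}_{A'}$ and therefore cannot have increased Schmidt number; hence $(\openone_A\otimes V_j^\dagger)\tilde\sigma_j(\openone_A\otimes V_j)\in S_k$ by local-unitary invariance of $S_k$, and
\[
\tr(\tilde W\tilde\sigma)=\sum_j\tr\!\big[W_{AB}\,(\openone_A\otimes V_j^\dagger)\tilde\sigma_j(\openone_A\otimes V_j)\big]\geq 0 .
\]

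Next I would verify optimality and the trace condition. Embedding $\rho_{AB}$ as $\tilde\rho=\rho_{AB}\otimes\ketbra{j_0}_{A'}$ in the identity block gives $R_{S_k}(\tilde\rho)=R_{S_k}(\rho_{AB})$ by the Proposition, while $-\tr(\tilde W\tilde\rho)=-\tr(W_{AB}\rho_{AB})=R_{S_k}(\rho_{AB})$, so $\tilde W$ is optimal for $\tilde\rho$. The $1$-design property then yields
\[
\tr_{\tilde A}\tilde W=\sum_j V_j(\tr_A W_{AB})V_j^\dagger=\frac{N}{d_B}\,\tr(W_{AB})\,\openone_B\propto\openone_B,
\]
as required. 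The only delicate point is to twirl on $B$ without spoiling the attained value: a naive twirl $\frac1N\sum_j(\openone\otimes V_j)W(\openone\otimes V_j^\dagger)$ would change both the detected state and the objective $-\tr(W\rho)$, and the remedy is precisely to keep a ``copy'' of the label in the register $A'$ and to seat $\rho$ in the block $V_{j_0}=\openone$, so that the original value is reproduced while the enlargement of $A$ leaves $R_{S_k}$ invariant.
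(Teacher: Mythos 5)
Your proof is correct, but it takes a genuinely different route from the paper's. The paper leaves $B$ untouched and instead appends to $\tilde{W}_{AB}$ a single positive-semidefinite block $\Delta W_{AB}\propto \openone^{\perp}\otimes\bigl(\|\tilde{W}_B\|_\infty\openone_B-\tilde{W}_B\bigr)$ supported on a part of $A$ orthogonal to the supports of $\rho_A$ and $\tilde{W}_A$; the block is engineered so that its partial trace over $A$ tops up $\tilde{W}_B$ to $\|\tilde{W}_B\|_\infty\openone_B$, while its positivity guarantees feasibility and its orthogonality to $\rho_{AB}$ guarantees that the attained value is unchanged. You instead perform a flagged Heisenberg--Weyl twirl on $B$, storing the design label in a register appended to $A$ and seating $\rho$ in the identity block. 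Both arguments work. Yours is the more systematic one: feasibility follows from the general facts that a local operation on the $\tilde A$ side and a local unitary on $B$ cannot increase the Schmidt number, and the trace condition is automatic from the $1$-design property, with no hand-crafted compensator; it also meshes naturally with the Heisenberg--Weyl unitaries reused in the subsequent multichannel construction, and your explicit appeal to the Proposition to equate $R_{S_k}$ of the embedded state with that of the original makes precise a step the paper leaves implicit. The price is a larger enlargement of $A$ (by a factor $d_B^2$ rather than by a few extra dimensions) and a proportionality constant $N\tr(W_{AB})/d_B$ whose sign is not controlled, whereas the paper's construction yields the manifestly non-negative constant $\|\tilde{W}_B\|_\infty$; since the lemma, and its later use through $F_{AB}=\openone_{AB}-W_{AB}\geq 0$, only require proportionality, this difference is immaterial.
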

\begin{proof}
	Let $\tilde{W}_{AB}$ be optimal for the sake $R_{S_k}$. We will construct construct a new optimal witness $W_{AB}=\tilde{W}_{AB}\oplus\Delta W_{AB}$ such that $\tr_A W_{AB} \propto \openone_B$. Let us define
	\[
	\Delta W_{AB}:=\frac{\openone_{\lceil\|W_B\|_\infty\rceil}^\perp}{\|\tilde{W}_B\|_\infty}\otimes(\|\tilde{W}_B\|_\infty \openone_B-\tilde{W}_B),
	\]
	with $\openone_{\lceil\|W_B\|_\infty\rceil}^\perp$ the identity operator on a $\lceil\|W_B\|_\infty\rceil$-dimensional space orthogonal to the support of $\rho_A$ and $\tilde{W}_{A}$. The operator $\Delta W_{AB}$ is positive semidefinite by construction, so $W_{AB}=\tilde{W}_{AB}\oplus\Delta W_{AB}$ still has non-negative expectation value with respect to states in $S_k$. Also, still by construction,
	\[
	\begin{split}
	\tr_A W_{AB}
	&= \tr_A \tilde{W}_{AB} + \tr_A \Delta W_{AB} \\
	&= \tilde{W}_B +\|\tilde{W}_B\|_\infty \openone_B-\tilde{W}_B \\
	&= \|\tilde{W}_B\|_\infty \openone_B
	\end{split}
	\]
	as required by the claim. Moreover,
	\[
	\tr(\rho_{AB}W_{AB}) = \tr(\rho_{AB}\tilde{W}_{AB})
	\]
	by construction, because of ther orthogonality of the supports of $\Delta W_{AB}$ and $\rho_{AB}$. It remains to be seen that $W_{AB}\leq \openone_{AB}$. This is the case, since for any pure state $\ket{\psi}_{AB}$, we have $\ket{\psi}_{AB} = \sqrt{p}\ket{\psi_1}_{AB}\oplus\sqrt{1-p}\ket{\psi_2}_{AB}$, for $0\leq p\leq 1$, and normalized $\ket{\psi_1},\ket{\psi_2}$ and the same direct sum structure as in $W_{AB}$. Thus,
	\[
	\bra{\psi}W\ket{\psi}=p \bra{\psi_1}\tilde{W}\ket{\psi_1} + (1-p) \bra{\psi_2}\Delta W\ket{\psi_2}\leq 1
	\]
	since $\|\tilde{W}\|_\infty \leq 1$ by assumption and $\|\Delta W\|_\infty \leq 1$ by construction.
\end{proof}
Given the last lemma, we can assume that we are working with local dimensions $d_A,d_B$ into which a given state $\rho_{AB}\in S(\H_{d_A}\otimes\H_{d_B})$ is embedded such that we deal with an optimal witness $W_{AB}$ for $R_{S_k}(\rho_{AB})$ satisfying $-\tr(W_{AB}\rho_{AB})=R_{S_k}(\rho_{AB})$, and $W_B = \tr_B(W_{AB})\propto \openone_B$. We recall also that $W_{AB}\leq \openone_{AB}$, so that we can define $F_{AB} = \openone_{AB}-W_{AB}\geq 0$, with $\tr(F_{AB}\rho_{AB})=1+R_{S_k}(\rho_{AB})$. Since $F_{AB}\geq 0$ and $F_{B} = d_A \openone_B + \tr_A(W_{AB})\propto \openone_B$, we can interpret it as the Choi-Jamiolkowski operator of a completely positive map $c\Lambda^\dagger$ with $c\geq 0$ and $\Lambda^\dagger\in\T(\H_{d_A},\H_{d_B})$ unital, so that $\Lambda\in\T(\H_{d_B},\H_{d_A})$, its dual via $\tr(X\Lambda^\dagger(Y))=\tr(\Lambda(X)Y)$ $\forall X,Y$, is trace-preserving. We thus, for any state $\tau_{AB}$, we have
\[
\tr(F_{AB}\tau_{AB}) = c\tr(\ket{\phi_{d_A}^+}\bra{\phi_{d_A}^+}\i\otimes\Lambda(\tau_{AB}))
\]
where $\ket{\phi_{d_A}^+}$ is the $d_A$-dimensional maximally entangled state, and $\Lambda$ is a completely positive and trace-preserving, hence a quantum channel.

Let us now consider the $d_a^2$ channels $\Gamma_{k,l} = X^kZ^l\Lambda[\cdot](X^kZ^l)^\dagger$, for $k,l=0,\ldots, d_A-1$, with $X$ the unitary shift operator defined by the action $X\ket{n}=\ket{n+1}$ (with addition understood to be modulo $d_a$) on a computational basis $\{\ket{n}\}_{n=0}^{d_A-1}$, and $Z$ the unitary phase operator defined by the action $Z\ket{n}=\exp(i2\pi n/d)\ket{n}$.

We now analyze the success in discriminating these channels in two cases: with the entangled probe-ancilla state $\rho_{AB}$, and with a state $\sigma^{k}\in S_k$. We will consider the case were the \emph{a priori} probability of each of the channels $\Gamma_i$ is the same and equal to $1/d_A^2$.

In the case where we use the probe-ancilla state $\rho_{AB}$, we consider the final POVM to be of the following form:
\[
M_{k,l} = (\openone \otimes X^kZ^l )  \proj{\phi_{d_A}^+} (\openone \otimes X^kZ^l )^\dagger.
\]
It is well known that the POVM elements so defined form an orthonormal basis for $\H_{d_A}\otimes \H_{d_A}$ \cite{werner2001all}.

The probability of guessing correctly is equal exactly to
\begin{align}
&\quad\frac{1}{d_A^2}\sum_{k,l} \tr(M_{k,l}\i\otimes\Gamma_{k,l}(\rho_{AB})) \nonumber\\
& = \frac{1}{d_A^2}\sum_{k,l} \tr(\proj{\phi_{d_A}^+} \i\otimes\Lambda (\rho_{AB})) \nonumber\\
& = \frac{1}{c}\tr(F_{AB}\rho_{AB}) \\
& = \frac{1}{c}(1+R_{S_k}(\rho_{AB})).
\end{align}

We want to upper bound the probability of guessing correctly when using instead $\sigma^{(k)}$ as input, taking into account the use of an arbitrary POVM $\{N_{k,l}\}$. We have 
\begin{align}
&\quad\frac{1}{d_A^2}\sum_{k,l} \tr(N_{k,l}\i\otimes\Gamma_{k,l}(\sigma^{(k)}_{AB})) \nonumber\\
& \leq \frac{1}{d_A^2}\sum_{k,l} \tr(N_{k,l})\|\i\otimes\Gamma_{k,l}(\sigma^{(k)}_{AB})\|_\infty \nonumber\\
& = \frac{1}{d_A^2} \tr(\sum_{k,l}N_{k,l}) \frac{1}{c}\|\i\otimes\Lambda(\sigma^{(k)}_{AB})\|_\infty \nonumber \\
& \leq \frac{1}{c}\frac{1}{d_A^2} \tr(\openone_{d_A^2}) \nonumber \\
& = \frac{1}{c},
\end{align}
where we have used Holder's inequality, the unitary invariance of the operator norm, and the fact that, since $\Lambda$ is a channel, then $\i\otimes\Lambda(\sigma^{(k)}_{AB})$ is a valid state with operator norm less or equal to one.

By considering this particular construction of a channel discrimination problem, we have proven that
\[
\sup \frac{p_{\textrm{guess}}(\rho)}{p^{\textrm{(k)}}_{\textrm{guess}}}\geq1+R_{S_k}(\rho),
\]
where the supremum is over all channel discrimination problems, and the ratio is between the probability of guessing correctly by making use of $\rho_{AB}$ as input, or by making use of an arbitrary state $\sigma^{(k)}$ with Schmidt number less or equal to $k$. On the other hand, by the very definition of robustness
\[
\begin{split}
p_{\textrm{guess}}(\rho)
&= p_{\textrm{guess}}\big((1+R_{S_k}(\rho))\sigma^{{(k)}} - R_{S_k}(\rho)\tau\big)\\
&\leq (1+R_{S_k}(\rho)) p^{(k)}_{\textrm{guess}},
\end{split}
\]
hence we see that actually it is
\[
\sup \frac{p_{\textrm{guess}}(\rho)}{p^{\textrm{(k)}}_{\textrm{guess}}}=1+R_{S_k}(\rho).
\]
A similar about the robustness of entanglement, but with conceptually important differences, was obtained by R. Takagi et al.~\cite{takagi2018}.

\section{Conclusions}

We have have generalized the analysis of Ref. \cite{ref:pw} on the usefulness of entanglement in channel discrimination. We have considered both the case of binary channel discrimination and of multichannel discrimination. In both cases, we have highlighted how, in a sense, the more entangled a bipartite state is, according to the notion of Schmidt number, the larger the set of states it outperforms, and the largest the advantage. We have shown that the latter advantage, in the case of multichannel discrimination, is captured in a very precise and elegant way by the Schmidt number robustness, which generalizes the entanglement robustness. This is remarkable as it gives an exact operational interpretation of the Schmidt number robustness. It is worth noticing that the specific multichannel discrimination task that we conceived is the concatenation of a local, fixed, and deterministic ``filter'', followed by the local action of the same local unitaries used in dense coding. What we are exploiting is the purely quantum effect that a local transformation can generate an entire basis for a bipartite system. Also, if we move away from the framework of channel discrimination, it is fascinating to consider the action of the filtering channel on one system as a simple, single-sided noise model affecting a dense-coding scenario. In such a very simple model, the Schmidt number robustness captures the maximum advantage that a ``highly entangled'' state can give. Finally, we notice that we were able to single out the Schmidt number robustness as advantage factor, while limiting ourselves to considering multiple channels, without the need to consider more general instruments. This suggests that also the quantitative characterization of steering in terms of steering robustness in Ref. \cite{piani2015necessary} might be improved. 

\begin{acknowledgments}
	M. P. thanks F. G. S. L. Brand{\~a}o for useful and inspiring discussions, and for his hospitality during a visit at the University College London, where some of the results presented here were obtained. We thank R. Takagi, B. Regula, G. Adesso and their collaborators for sharing their related preliminary results.
	J. B. is supported by National Research Foundation of Korea (NRF2017R1E1A1A03069961) and the ITRC (Information Technology Research Center) support program (IITP-2018-2018-0-01402) supervised by the IITP (Institute for Information \& communications Technology Promotion). M. P. acknowledges support from European Union's Horizon 2020 Research and Innovation Programme under the Marie Sk{\l}odowska-Curie Action OPERACQC (Grant Agreement No. 661338), and from the Foundational Questions Institute under the Physics of the Observer Programme (Grant No. FQXi-RFP-1601). D. C. was supported by the Polish National Science Centre project 2015/19/B/ST1/03095.
\end{acknowledgments}



%

\end{document}